\newtheorem{thm}{Theorem}\crefname{thm}{theorem}{theorems}
\newtheorem{lem}[thm]{Lemma}\crefname{lem}{lemma}{lemmas}
\crefname{cor}{corollary}{corollaries}
\crefname{dfn}{definition}{definitions}
\DeclareMathOperator{\tr}{tr}
\newcommand{\ot}{\otimes}
\newcommand{\Ep}{\mathcal E}
\newcommand{\Dp}{\mathcal D}
\newcommand{\ketum}{\ket {{\phi^+}}}
\newcommand{\braum}{\bra {{\phi^+}}}
\newcommand{\ssection}[1]{\smallskip\phantomsection\addcontentsline{toc}{section}{#1}\textit{#1.---}}
\begin{document}

\title{Error Correction of Quantum Reference Frame Information}
\author{Patrick Hayden}
\affiliation{Stanford Institute for Theoretical Physics, Stanford University, Stanford, California 94305, USA}
\author{Sepehr Nezami}
\affiliation{Stanford Institute for Theoretical Physics, Stanford University, Stanford, California 94305, USA}
\author{Sandu Popescu}
\affiliation{H. H. Wills Physics Laboratory, University of Bristol, Tyndall Avenue, Bristol, BS8 1TL, United Kingdom}
\author{Grant Salton}
\affiliation{Stanford Institute for Theoretical Physics, Stanford University, Stanford, California 94305, USA}
\begin{abstract}
The existence of quantum error correcting codes is one of the most counterintuitive and potentially technologically important discoveries of quantum information theory. However, standard error correction refers to abstract quantum information, \emph{i.e.}, information that is independent of the physical incarnation of the systems used for storing the information. There are, however, other forms of information that are \emph{physical} -- one of the most ubiquitous being reference frame information. Here we analyze the problem of error correcting physical information. The basic question we seek to answer is whether or not such error correction is possible and, if so, what limitations govern the process. The main challenge is that the systems used for transmitting physical information, in addition to any actions applied to them, must necessarily obey these limitations.  Encoding and decoding operations that obey a restrictive set of limitations need not exist \emph{a priori}. We focus on the case of erasure errors, and we first show that the problem is equivalent to quantum error correction using group-covariant encodings.  We prove a no-go theorem showing that that no finite dimensional, group-covariant quantum codes exist for Lie groups with an infinitesimal generator (\emph{e.g.,} U(1), SU(2), and SO(3)). We then explain how one can circumvent this no-go theorem using infinite dimensional codes, and we give an explicit example of a covariant quantum error correcting code using continuous variables for the group U(1). Finally, we demonstrate that all finite groups have finite dimensional codes, giving both an explicit construction and a randomized approximate construction with exponentially better parameters.
\end{abstract}
\maketitle

\ssection{Introduction}%
One of Shannon's original insights in the formulation of information theory was to focus on the transmission of sequences of symbols, such as strings of $0$'s and $1$'s, without regard to the semantic content of the message. This approach makes it possible to encode an enormous variety of messages, from phone numbers to photos, as long as the original information can be faithfully represented in terms of a sequence of symbols. The same situation exists in the quantum world: quantum information theorists are primarily concerned with information that can be stored in a system of qubits (or larger quantum systems), independent of the type of information.
\par
Here we study a situation in which the information is \emph{physical} and cannot be represented simply as abstract qubits. Consider the following purely classical scenario~\cite{gisin1999spin}. Alice wishes to transmit some directional information to Bob, \emph{e.g.}, the axis of rotation of a gyroscope indicated by the vector $\vec{n}$, so that Bob can prepare a gyroscope rotating around the same axis as Alice's. If Alice and Bob share a reference frame, Alice can measure different components of $\vec{n}$ and describe the result  \emph{in words} to Bob, who then prepares his own gyroscope to match. However, if Alice and Bob do not share a reference frame, \emph{i.e.}, if Alice and Bob do not know the relative alignment of their coordinate systems, then this task is impossible. Without a shared reference frame, Alice has no way to communicate a set of symbols to Bob indicating the axis of rotation of her gyroscope. Another simple example is clock synchronization, wherein  two distant observers want to synchronize their clocks, but it is not possible to do so by sending purely symbolic messages~\footnote{If the transmission time for each message is predetermined, that provides a resource that could itself be used for clock synchronization. To avoid this loophole, symbolic messages should not be received at predetermined times.}.
\par
Of course, the simple examples described above do not mean that sending physical information is impossible. For example, in a classical world, Alice can prepare and send a physical copy of her gyroscope to Bob, thereby indicating her direction. In this way, Alice and Bob can even establish a shared reference frame. Similarly, in the clock synchronization problem Alice can send a copy of her clock to Bob~\cite{preskill2000quantum} to establish a common time standard (ignoring relativistic effects).  Quantum mechanically, Alice can send direction information by sending polarized spins, while timing information can be sent by quantum clocks such as two-level atoms. As is common in the quantum world, many interesting and counterintuitive effects occur. For example, sending two anti-parallel spins polarized along the desired direction is a better direction indicator than sending two parallel spins~\cite{massar1995optimal,gisin1999spin}. The problem of aligning quantum reference frames has garnered significant attention in recent years~\cite{bagan2001optimal,jozsa2000quantum,souza2008quantum,bartlett2006degradation,bartlett2003classical,peres2001entangled,bartlett2007reference,massar1995optimal,gisin1999spin,preskill2000quantum,gour2008resource,marvian2014modes}.  
\par
In this paper we are interested in quantum error correction of physical information. Crucially, physical information can only be communicated using systems that themselves have the physical property of interest.  This places constraints on the actions that can be performed on the physical systems, since we can't, for example, destroy or change physical information arbitrarily.  In particular, there may be constraints on the set of possible encoding/decoding schemes that one might have used to make the system more robust to errors, thereby limiting our ability to perform quantum error correction. In this paper, we will characterize the constraints placed on quantum error correction of physical information.
\par
In each of the examples described above, Bob's lack of knowledge about Alice's reference frame or time standard is mathematically modeled by the action of an unknown element of some group on Alice's state.  For directional reference frames, Alice and Bob are related by an unknown rotation (\emph{i.e.}, an element of $SO(3)$), whereas in the example of clock synchronization Alice and Bob's clocks are related by an unknown time translation (which can be thought of as an element of $U(1)$).  In the spirit of~\cite{marvian2013theory} (which generalizes reference frame information to general asymmetry information in the context of resource theory~\cite{marvian2014asymmetry,marvian2013theory,marvian2014extending,brandao2013resource,veitch2014resource,devetak2008resource}), we study error correction of physical information that transforms under an arbitrary group $G$. An important reduction following from the analysis of~\cite{marvian2013theory} is that the existence of encoding schemes for this type of information is equivalent to the existence of ordinary, yet $G$-covariant, encoding schemes, which can correct the same errors. 
\par
In this paper, we first study the case in which the group $G$ has at least one infinitesimal generator (\emph{e.g.}, the rotation group and time translation examples above).  In this first case, we find a result strikingly different from conventional, abstract quantum information: we prove a \emph{no-go} theorem showing that it is impossible to encode physical information in any number of finite dimensional systems in such a way that the encoding allows for perfect correction of any erasure error. 
We then show that both conditions of the no-go theorem are necessary by constructing codes that circumvent the theorem when either of the conditions is violated.  Specifically, we first demonstrate how one can encode physical information to protect against erasure errors when one uses continuous-variable modes (with \emph{infinite dimensional Hilbert spaces}). Since continuous-variable modes are used, we expect this result to be of practical interest.{}
We then construct a perfect encoding scheme for any \emph{finite group} $G$ into finite dimensional spaces, which is again robust to erasure errors.
Finally, we study a family of group covariant random codes and show that they can provide encoding schemes with better parameters than the perfect schemes for finite groups.
\par
It is worth noting that the covariant channel formulation of the problem is closely related to other results in the literature that have very different motivations, including the \emph{Eastin-Knill Theorem}~\cite{EastinKnill} and recent studies of \emph{invariant perfect tensors}~\cite{li2016invariant}. We present a more detailed comparison in the~\hyperref[sec:discussion]{discussion}.

\ssection{Reference frame error correction}
We begin with a more formal description of error correction in the familiar case of spatial reference frames, which corresponds to $G = SO(3)$; the generalization to other groups is immediate.
Suppose Alice and Bob share a (possibly noisy) quantum channel.  Alice wants to communicate some directional quantum information (a single spin, say) to Bob, but Alice and Bob do not share a common reference frame.  Specifically, their reference frames are related by an unknown rotation $R \in SO(3)$. 
Alice and Bob will claim success if Bob receives the spin in the same direction that it was sent by Alice (\emph{i.e.}, the directional information is unchanged -- a condition they could check at a later stage).
If the task is successful, Bob can use the received spin for various tasks, such as establishing a shared reference frame.
\par
The simplest method of sending any quantum information is to send the quantum state itself.  This is also true of directional information, but since the quantum channel between Alice and Bob is noisy, we must account for the possibility of error. 
We fix our error model to be \emph{erasure} of a single spin (or mode), and our goal will be to design an error correcting code to protect the directional information from this noise.
\par
For simplicity of presentation, we will discuss an encoding scheme that encodes one spin into three (see~\cref{fig:alicebob}), but the reader is cautioned that the choice of one-into-three is just for clarity; our result holds for an arbitrary one-to-many encoding.  We split the process into $6$ steps:
\begin{figure}
\includegraphics[height=3.2cm]{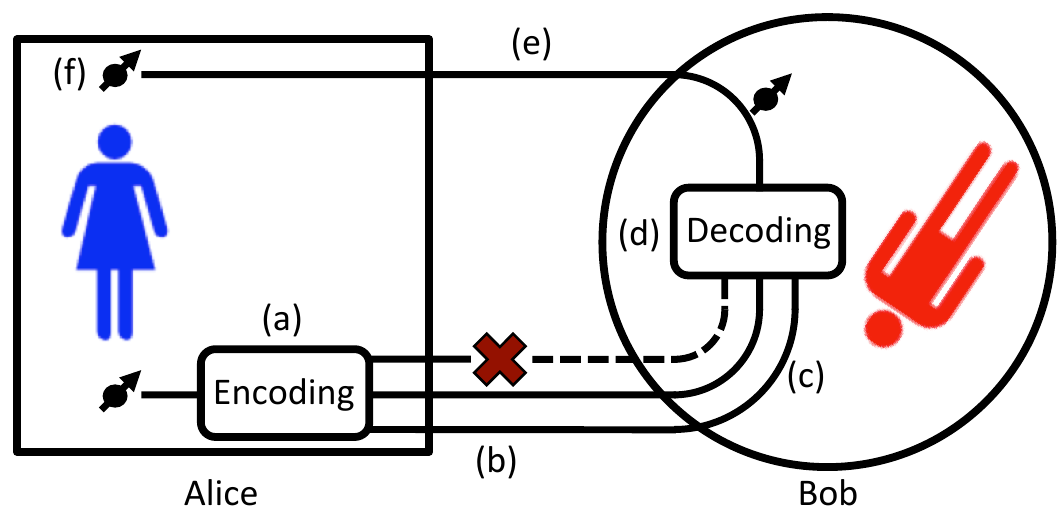}
\caption{ Setup: Alice wants to send a spin to Bob, but Alice and Bob do not share a reference frame. (a) Alice encodes her spin into an error correcting code. (b) The environment erases one of the spins. (c) Bob receives the encoded spins in \emph{his} reference frame. (d) Bob then decodes the remaining spins to reveal the original state. (e) Bob sends the decoded spin using a (hypothetical) perfect channel to Alice for verification. (f) Alice confirms that the recovered state is the same as her original state.}
\label{fig:alicebob}
\end{figure}
\begin{enumerate}
\item \emph{\Cref{fig:alicebob}a}.
Alice starts with an unknown input state $\rho_\text{in}$, a density operator on $\mathcal{H}_{\text{in}}$, representing some directional information. Alice encodes this initial state using an encoding channel $\Ep_A$. We use the subscript $A$ to indicate that $\Ep_A$ is the encoding map in Alice's reference frame, and to distinguish it from the map as seen in Bob's frame: $\Ep_B$, to which we will return shortly. Thus, the encoded state $\sigma_{123}$ on three spins is given by $\sigma_{123}= \Ep_A(\rho_\text{in})$.
\item \emph{\Cref{fig:alicebob}b}.
Spin $j \in \{1, 2, 3\}$ is lost, which is known as an erasure error. The erased spin could be any one of the three, but it is assumed that Bob can infer which.
\item \emph{\Cref{fig:alicebob}c}.
Prior to the erasure error, the encoded state as seen by Bob would be $U_1\ot U_2 \ot U_3 \sigma_{123} U_1^\dagger\ot U_2^\dagger \ot U_3^\dagger$, where $U_i=U_i(R)$ is a unitary representation of the unknown rotation $R$ mapping Alice's coordinate system Bob's. 
Bob then receives the state $\tr_j{( U_1\ot U_2 \ot U_3 \sigma_{123} U_1^\dagger\ot U_2^\dagger \ot U_3^\dagger)}$. 
\item \emph{\Cref{fig:alicebob}d}.
Bob decodes the state with an $R$-independent decoding map $\Dp_j$ to obtain ${\Dp_j \left[\tr_j(U_1\ot U_2 \ot U_3 \sigma_{123} U_1^\dagger\ot U_2^\dagger \ot U_3^\dagger)\right]}$. This is the state recovered in Bob's reference frame.  If the protocol is successful, this state should be equal to $\tilde \rho_\text{in}=U_\text{in} \rho_\text{in} U_\text{in}^\dagger$ in order to match Alice's original state, where $U_\text{in}=U_\text{in}(R)$ is the representation of the rotation group acting on the initial state, and the tilde signals that this is the input state as seen from Bob's rotated reference frame.

\item \emph{\Cref{fig:alicebob}e}. 
Bob sends the decoded state through a hypothetical perfect channel to Alice for verification.
\item \emph{\Cref{fig:alicebob}f}.
Success is claimed if the received state is the same as the initial state in Alice's frame.
\end{enumerate}
Using $\rho_\text{in}=U_\text{in}^\dagger \tilde \rho_\text{in}\, U_\text{in}$, the success condition becomes
\begin{equation}\label{eq:useless-success}
\tilde \rho_\text{in}\!=\! \Dp_j \big[\tr_j\!{\big( U_1\!\ot\! U_2 \!\ot\! U_3 \,\Ep_A(U_\text{in}^\dagger \tilde \rho_\text{in}U_\text{in}) U_1^\dagger\!\ot\! U_2^\dagger \!\ot\! U_3^\dagger\big)}\big],
\end{equation}
for all $R\in SO(3)$, states $\tilde \rho_\text{in}\in\mathcal{H}_{\text{in}}$ , and $j\in\{ 1,2,3\}$.
\par
\ssection{Covariant error correction}
Covariant quantum error correction is a seemingly different problem in which the encoding map is required to commute with the action of the group.  
Continuing the example of mapping a single spin into three, the covariance requirement is that the encoding map satisfy
\begin{equation}\label{eq:cov1}
U_1\ot U_2 \ot U_3\, \Ep(U_\text{in}^\dagger \rho_\text{in} \,U_\text{in}) U_1^\dagger\ot U_2^\dagger \ot U_3^\dagger=\Ep(\rho_\text{in})
\end{equation}
for all $R \in SO(3)$ and initial states $\rho_\text{in}$. To be clear, in this problem, Alice and Bob are assumed to share a single reference frame. Imposing the simple constraint (\ref{eq:cov1}) on the encoding map, however, defines an error correction problem equivalent to reference frame erasure correction.

Return now to the setting of reference frame error correction to see why. Alice performs the encoding $\Ep_A$ in her reference frame.  In Bob's reference frame, this operation is denoted by $\Ep_{B,R}$. ($\Ep_{B,R}$ is the quantum channel corresponding to the operation Alice performs as seen in Bob's reference frame.)  For a fixed $\Ep_A$ in Alice's reference frame, $\Ep_{B}$ in Bob's frame is still parametrized by the unknown rotation $R$, i.e., $\Ep_{B}=\Ep_{B,R}$.  Specifically, $\Ep_{B,R}(\tilde \rho_\text{in})={ U_1\ot U_2 \ot U_3 \,\Ep_A(\,U_\text{in}^\dagger\tilde \rho_\text{in} \,U_\text{in})\, U_1^\dagger\ot U_2^\dagger \ot U_3^\dagger}$.
The success condition simplifies to
\begin{equation}\label{eq:success} 
\Dp_j (\tr_j(\Ep_{B,R}( \tilde\rho_\text{in})))= \tilde\rho_\text{in},
\end{equation}
for all states $\tilde\rho_\text{in}$ and $j\in\{1,2,3\}$.
\par
Now introduce the \emph{average channel} $\Ep=\mathbb E_R[\Ep_{B,R}]$, where the average is over all rotations $R\in SO(3)$ according to the Haar measure.
By the linearity of the partial trace and the decoding channels, the error correction relation~(\ref{eq:success}) holds for the average channel: $\Dp_j (\tr_j(\Ep(\tilde\rho_\text{in})))= \tilde\rho_\text{in}$. Moreover, the averaged channel is clearly covariant in the sense of (\ref{eq:cov1}), provided we substitute $\tilde\rho_\text{in}$ for $\rho_\text{in}$ in the equation.
\par 
So if reference frame error correction~(\ref{eq:useless-success}) is possible, we have found a covariant erasure-correcting encoding. Moreover, it is straightforward to confirm that by choosing $\Ep$ to be $\Ep_A$, eqs.~(\ref{eq:success}) and~(\ref{eq:cov1}) lead to~eq.(\ref{eq:useless-success}). Therefore, reference frame error correction and covariant error correction are equivalent.

\par
\ssection{Results}
Let us now study a more general question. Consider an encoding map $\Ep$ which encodes an initial state on $\mathcal H_\text{in}$ into $n$ encoded systems on $\mathcal H_\text{out}=\mathcal H_1 \ot \cdots \ot \mathcal H_n$. We do not impose any constraints on the output Hilbert spaces at this point (i.e., they can be the same or different, finite or infinite dimensional, etc.) Suppose there exists a group $G$, and representations $U_\text{in}, U_1, \cdots,U_n$ acting on the different Hilbert spaces. Moreover, suppose that the channel is covariant under the action of the group:
\begin{equation}\label{eq:inv}
\Ep(\rho_\text{in})=U_1\ot \cdots \ot U_n \Ep(U_\text{in}^\dagger \rho_\text{in} U_\text{in}) U_1^\dagger \ot \cdots \ot U_n^\dagger
\end{equation}
Our goal is to answer the following question: \emph{is it possible to recover the original state after erasure of an arbitrary set of at most $k$ subsystems (which we henceforth refer to as \emph{modes)?}}
\par
We will study this question in different scenarios:
\begin{enumerate}
\item \emph{$G$ is a Lie group and the code is finite dimensional}. 
We prove a no-go theorem: no perfect covariant error correcting scheme can be implemented in this case. This applies to the example of sending spins, as in the original reference frame error correction task. In fact, the \emph{no-go} theorem applies to all groups with at least one infinitesimal generator, and it states that such generators can only act trivially on encoded states.
\item \emph{$G$ is a Lie group and the code is infinite dimensional}. 
We show that $G$-covariant error correcting codes are possible when the encoding uses infinite dimensional systems. This illustrates the existence of interesting error correcting codes for a Lie group when the conditions of the no-go theorem above are not satisfied. We provide an explicit code for $G=U(1)$ in~\cref{app:u1}.
\item \emph{$G$ is a finite group and the code is finite dimensional}. 
For any finite group $G$, we find examples of perfect covariant error correcting schemes. This is again consistent with our no-go theorem since finite groups do not have infinitesimal generators. We also provide a randomized construction in~\cref{app:apprand} to obtain approximate codes with better parameters.
\end{enumerate}
\par
\ssection{Case 1: $G$ is a Lie group and the code is finite dimensional}\label{sec:nogo}
In this case, suppose that the local Hilbert space dimensions are finite, and that the group $G$ is a Lie group~\footnote{We exclude the case of $0$-dimensional Lie groups. Also, $G$ does not actually need to be a Lie group, but it must have at least one infinitesimal generator. In that case, our proof shows that the infinitesimal generators of the group can only act trivially on the system.}.
Choose one infinitesimal generator of the Lie group, without loss of generality. We denote this generator acting on the input mode by $T_\text{in}$ and on the $i$th output mode by $T_i$.
Thus, the generator acting on the full set of output modes is $T_\text{out}=  T_1+\cdots  T_n$.
Assume that $T_\text{in}$ is non-trivial; our goal will be to show that covariant quantum error correction is impossible with this assumption.
\par 
Consider an initial state $\rho_\text{in}$ and a slightly rotated state $\rho_\text{in}(\epsilon)= e^{-i\epsilon T_\text{in}}\rho_\text{in} e^{i\epsilon T_\text{in}}$. These states are encoded as $\sigma_\text{out}=\Ep(\rho_\text{in})$ and $\sigma_\text{out}(\epsilon) =\Ep(\rho_\text{in}(\epsilon))$.
Using the fact that $\Ep(\rho_\text{in})$ is invertible on its range, we can find a set of orthogonal isometries $\{E_i\}$, ($E_i^\dagger E_j=\delta_{ij}I$) and probabilities $p_i$ such that \[\Ep(\rho_\text{in})=\sum_i {p_i E_i \rho_\text{in} E_i^\dagger}.\] (see, \emph{e.g.},~\cite{nielsen2002quantum}, theorem 10.1 and its proof using $\mathcal H_\text{in}$ as the code space.)
The inverse channel $\Ep^{-1}(\sigma_\text{out})$ can be described by the same set of isometries on the range of $\Ep$ \[\Ep^{-1}( \rho_\text{out})=\sum_i {E_i^\dagger  \rho_\text{out} E_i}+ \Pi_\perp  \rho_\text{out} \Pi_\perp,\] where $\Pi_\perp=I-\sum_i{E_iE_i^\dagger}$. 
A crucial but elementary property of $\Ep^{-1}$ is that if $\sigma_\text{out} = \Ep (\rho_\text{in})$ and $A$ is some arbitrary operator, then $\Ep^{-1}(A\sigma_\text{out})=\Ep^\dagger(A)\rho_\text{in}$, where $\Ep^\dagger(A)=\sum_i{p_iE_i^\dagger A E_i}$.
Expanding the relation $\rho_\text{in}-\rho_\text{in}(\epsilon)=\Ep^{-1}( \sigma_\text{out} -  \sigma_\text{out}(\epsilon))$ to first order in $\epsilon$ we obtain 
\begin{align}\label{eq:3}
[T_\text{in},\rho_\text{in}]
&=\Ep^{-1}([T_\text{out}, \sigma_\text{out}]) \\
&=
[\Ep^\dagger(T_\text{out}),\rho_\text{in}].\nonumber
\end{align}
Under the assumption that error correction succeeds, we can then recover the original state from any of the $n-k$ subsets of the encoded modes. This means that upon tracing out all output modes except the $i$th mode, the remaining state $\rho_{i}$ is independent of the initial state (since if it weren't the mode number $i$ would contain information about the input state).
\par
Thus, for any state $\rho_\text{in}$, we find that $\tr(T_i \sigma_\text{out})=\alpha_i$, where $\alpha_i$ is independent of $\rho_\text{in}$. It is easy to see that \[\alpha_i=\tr(T_i \sigma_\text{out})=\tr( T_i \Ep(\rho_\text{in}))=\tr(\Ep^\dagger( T_{i})\rho_\text{in})\] for all $\rho_\text{in}$. Hence $\Ep^\dagger(T_i)\propto I$, and consequently $\Ep^\dagger(T_\text{out})\propto I$. This implies that the last term in~\cref{eq:3} is zero, which means that $[T_\text{in},\rho_\text{in}]=0$ for all $\rho_\text{in}$.  In order for $T_\text{in}$ to commute with all $\rho_\text{in}$ it must be trivial, which is a contradiction of our assumption.  We conclude that perfect recoverability is impossible.
\par
\ssection{Case 2: $G$ is a Lie group and the code is infinite dimensional}
If we allow Alice the ability to use infinite dimensional Hilbert spaces (violating one of the hypotheses of our no-go theorem), then even a na\"ive solution to the problem exists.  Intuitively, a simple way to achieve the task is for Alice to append a classical gyroscope to the encoded state that she sends to Bob~\footnote{To be precise, each classical gyroscope determines one axis.  In order to send a classical reference frame we need at least two gyroscopes for the $x$ and $y$ axes. By ``gyroscope'' we mean a complete indicator of the reference frame.}.  Bob can then infer information about Alice's reference frame by measuring the state of the gyroscope, thereby establishing a common reference frame. Indeed, this is one strategy we will outline below. Since the full state is sent through the noisy channel, Alice actually sends two gyroscopes in order to safeguard against loss of one of the encoded shares. Any reader disappointed by the construction's use of effectively classical gyroscopes should be heartened to know that the 1-into-3 encoding described in \cref{app:u1} achieves covariant error correction without them.

In the reference frame error correction paradigm, Alice chooses her favourite (non-covariant) erasure code and appends two redundant ancilla (the classical gyroscopes) indicating her reference frame to the encoded state.  The ancilla must necessarily be states in infinite dimensional Hilbert spaces so that the no-go theorem does not apply (and in this protocol this is also necessary so that Alice can specify her reference frame with perfect precision)\footnote{Some readers might take issue with calling this an erasure code, since such codes are usually constructed such that the Hilbert spaces of each share are the same (so that all Hilbert spaces in this case must then be infinite dimensional).  If desired, one can simply embed finite dimensional Hilbert spaces into the infinite dimensional spaces such that the group acts on these subspaces according to the associated finite dimensional representation and trivially on the rest.}. If any shares of the erasure code are lost, Bob can first measure the gyroscopes to learn Alice's reference frame, and then use the standard decoding on the remaining shares in the right frame.  Since Alice sent two ancilla, one can freely be lost without failure.

Let us now study this problem in the covariant quantum error correction paradigm.  Let $\mathcal H_G=\text{span}\{\ket g\}$, where $g\in G$.  The group acts via $U(g)\ket h= \ket {gh}$ \footnote{In fact, it is not necessary to assume that the basis is indexed by group elements -- they can be indexed by any set on which the group acts faithfully.}. To encode her state, Alice chooses her favourite, non-covariant erasure correcting code (denoted by $\Ep_0$), such as the $\mathbb{C}^3\to(\mathbb{C}^3)^{\otimes 3}$ qutrit code for example (wlog).  As before, we define the rotated encoding map (i.e., the map in Bob's frame) by
\begin{equation}\label{eq:append}
 \Ep_{g} (\Psi)= U(g)^{\ot 3} \Ep_0 \left[U^{\dagger} (g)\Psi U(g)\right] U^{\dagger{\ot 3}}(g). 
\end{equation}
To complete the encoding, Alice appends two ancilla in the state $\ket{e}\bra{e}$ (where $e \in G$ is the identity element) for a full encoded state $\Ep_{0}(\Psi) \ot \ket e\bra e^{\ot 2}$ as seen in her frame.  The two $\ket{e}\bra{e}$ registers represent the classical gyroscopes above.  The encoding is made \emph{covariant} by averaging over the group $G$.  Thus, the full encoding is defined by symmetrizing the channel and ancilla together: \[\Ep(\Psi)=\int_{g\in G}{d g\,\,\, \Ep_{g}(\Psi) \ot \ket g\bra g^{\ot 2}}, \] which can be easily seen to be covariant.

Our decoding procedure is then fairly simple: one need only measure any ancilla that are not lost, collapsing the state to one corresponding to the measured group element. We can then recover the encoded state from the any qutrit shares they were not erased by the noisy channel.

The procedure described above is not the only method one can use in this case.  In \cref{app:u1} we describe an explicit, group covariant, continuous-variable quantum erasure code for the example of $G=U(1)$.  An input continuous variable mode is mapped into three physical modes via the encoding
\begin{equation*}
E_{U(1)}=\sum_{x,y\in \mathbb Z}{ \ket{-3y, -x+y, 2(y+x)}_{123}\bra {x}_\text{in}}.
\end{equation*}
We leave all relevant details for \cref{app:u1}.

\ssection{Case 3: $G$ is a finite group and the code is finite dimensional}\label{sec:finitegroup}
Consider a \emph{finite} group $G$. Here we show that it is possible to find $G$-covariant channels that encode the input Hilbert space into finite dimensional Hilbert spaces while satisfying the erasure correction conditions.
\par
Suppose the group $G$ acts on some set $A$. By definition, the action of $G$ permutes the elements of $A$. Our goal is to construct an error correction scheme for which the action of the group commutes with the process of encoding, erasure, and decoding.
To achieve our goal, we first start with a non-covariant erasure .  We then consider a tensor product of many copies of this non-covariant code, one tensor factor for each element of $A$. As it  happens, this code (defined using many copies of a non-covariant code) is already a covariant code!  To see this, note that the encoding acts as a tensor product over the factors, while the group action simply permutes the factors. Therefore, the encoding map and the group action commute, which implies that the encoding is $G$-covariant.
\par
To be more precise, consider a channel $\Ep_0 : S(\mathcal H_\text{in}) \rightarrow S( \mathcal H_\text{out}:= \mathcal H^{\ot n})$ where $S(\mathcal H)$ denotes the space of density matrices on the Hilbert space $\mathcal H$. Suppose that $\Ep_0$ is an encoding map that allows for recovery after erasure of an arbitrary set of $k$ of the $n$ output modes. However, we make no assumptions about the covariance of $\Ep_0$ -- it is an arbitrary erasure correcting map. We now introduce a new encoding
\par
\begin{equation*}
\Ep=\bigotimes_{a\in A} \Ep_0= \Ep_0^{\otimes |A|},\qquad \Ep : S(\mathcal H_\text{in}^{\ot |A|}) \rightarrow S(\mathcal H_\text{out}^{\ot |A|}),
\end{equation*}
where we have used $\bigotimes_{a\in A} \Ep_0$ to indicate that the different tensor copies are labeled by elements of $A$.
For each $g \in G$ the action of the representation on $\mathcal H^{\ot |A|}$ is defined by \[U(g)\ket {\phi_{a_1}} \ket {\phi_{a_2}}\cdots \ket{\phi_{a_{|A|}}} = \ket{\phi_{g^{-1}a_1}}\ket{\phi_{g^{-1}a_2}}\cdots \ket{\phi_{g^{-1}a_{|A|}}}.\] Here $a_1 \cdots a_{|A|}$ is a list of the elements of $A$. The covariance of $\Ep$ follows from the definition, and the error correction properties of $\Ep$ are directly inherited from those of $\Ep_0$.  Therefore, we have succeeded in finding a perfect $G$-covariant channel.
\Cref{fig:permutation} shows an example in which $G=S_3$ (the permutation group on $3$ elements) and $A=\{1,2,3\}$.
\par
\begin{figure}
\includegraphics[height=3.2cm]{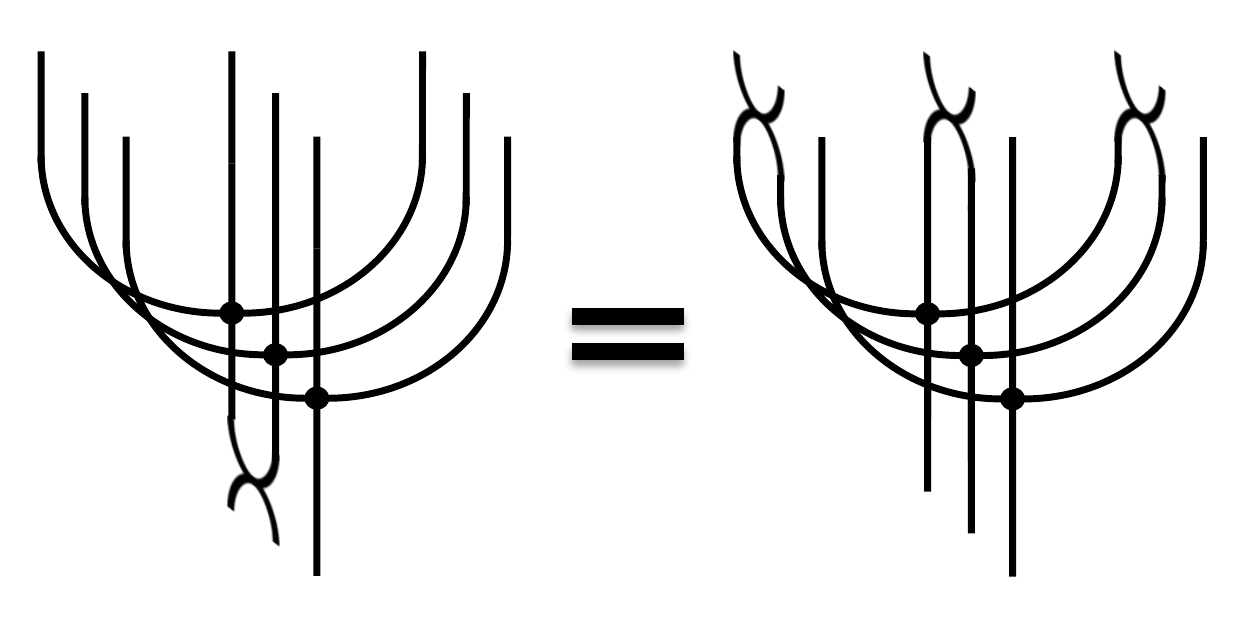}
\caption{Permutation covariance for the group $S_3$ acting on $S_3$ (i.e., $G=A=S_3$).  Each fork represents a code that maps one qudit into three, and can correct an erasure error on any one output qudit.  $\pi_{12}\in G$ is the transposition that swaps systems $1$ and $2$. \emph{Left.} The map $\Ep(U_\text{in}(\pi_{12})\rho_\text{in}U_\text{in}(\pi_{12})^\dagger)$. The group action permutes the inputs to the channel. \emph{Right.} The map $U_\text{out}(\pi_{12})\Ep(\rho_\text{in})U_\text{out}(\pi_{12})^\dagger$. As it is evident from the wiring of the forks, these two maps are equivalent.}
\label{fig:permutation}
\end{figure}
\par
While our construction can be formally extended to infinite groups with their associated infinite dimensional representations, we have not determined which additional conditions need to be imposed in order for the argument to remain mathematically rigorous.
\par
The construction presented in this section provides codes in which the Hilbert spaces can be exponentially large in $|G|$. However, it is known that in many cases random codes give near optimal error correcting schemes with good parameters~\cite{shor2002quantum,devetak2005private,lloyd1997capacity,hayden2008decoupling,hamada2005information}. In~\cref{app:apprand}, we show that choosing a random covariant isometry yields approximate error correcting codes for which the dimension of each mode is just $|G|$. 
For these codes, the worst-case fidelity of recovery, $F_\text{worst}$, behaves well with high probability. Specifically, $\text{P}(F_\text{worst}<1-\epsilon)$ decays exponentially in $|G|$. For example, we will show that: 
\begin{multline}\label{eq:conc}
\text{P} \left(F_\text{worst}<1-|G|^{\frac{9-2n}{8}} \right)\leq\\
 \exp\left(-\frac{|G|^2}{216} \left[|G|^{\frac{2n-8}{4}}-432\log\left(30|G|^{\frac{7+2n}{8}} \right) \right] \right)
\end{multline}
It is clear that for $n\geq 5$ and $|G|$ sufficiently large, the exponent on the right-hand side becomes arbitrarily negative, indicating that the worst-case fidelity of recovery is very close to $1$ with very high probability. 
                              
\ssection{Discussion}\label{sec:discussion}
We showed that perfect error correction of physical information against erasure is a process that depends on the details of the symmetry group and dimensions of the code.  For example, covariant error correction is impossible when the symmetry group is a Lie group and the code is finite dimensional. This is connected to the following \emph{no-go} theorems in the literature:
\par
$\bullet$ \emph{Eastin-Knill theorem}~\footnote{We thank Beni Yoshida pointing out the connection to the Eastin-Knill Theorem.}. 
Eastin and Knill proved~\cite{EastinKnill} that it is not possible to encode information in an error-detecting code in such a way that a set of universal gates can be implemented transversally. 
We can reproduce the main thrust of the Eastin-Knill theorem~\footnote{The Eastin-Knill theorem also discusses the possibility of encoding information in the disconnected components of the Lie group, a point that is absent in our work. Furthermore, the full Eastin-Knill theorem makes reference to universal gates.  In order to fully reproduce the theorem, we would need additional arguments concerning continuity of the channel and error detection.} using an instance of our no-go theorem in which the input space is the set of $N$ logical qudits, the output consists of physical qudits, and letting the group be $G=U(N)$.  Moreover, our continuous variable, infinite dimensional code construction provides a demonstration that the Eastin-Knill theorem can be circumvented in principle, although our examples do not appear to be useful for fault-tolerant quantum computation.
\par
$\bullet$ \emph{Invariant perfect tensors}. A quantum state on the tensor product of a number of Hilbert spaces is a \emph{perfect tensor} if, for any bipartition of the Hilbert space into two collections of constituent factors, it forms an isometry from the smaller space to the larger~\cite{pastawski2015holographic}.
Motivated by the construction of physical states in the Hilbert space of loop quantum gravity, the authors in~\cite{li2016invariant} defined the notion of \emph{invariant perfect tensors} as those perfect tensors which are invariant with respect to the action of $SU(2)$. In~\cite{li2016invariant}, 
it was proved that there are no invariant perfect tensors with four tensor factors. This can be seen as a direct consequence of our no-go theorem for $G=SU(2)$
, by considering a four-partite invariant perfect tensor as a $1$ mode to $3$ mode isometry.  Such an invariant perfect tensor with 4 tensor factors would define an $SU(2)$-covariant erasure correcting code, which is prohibited by our no-go theorem. Furthermore, our no-go theorem states that there are no invariant perfect tensors with higher numbers of tensor factors, thereby solving an open question in~\cite{li2016invariant}.
\par
One might hope to find a more quantitative relation between some measure of the size of the group and the dimension of the code when error correction is possible. For example, 
a condition of the form $|G|\leq \text{dim(code)}$ (i.e., dimension of the physical Hilbert space) is consistent with our no-go theorem and the examples in Cases 1 and 2. Another interesting avenue for future research relates to approximate error correction, in which one might like to find a relationship between the error tolerance $\epsilon$, group size $|G|$, and dimension of the code.

\par\ssection{Acknowledgements}
We thank Dawei Ding, Iman Marvian, Michael Walter, and Beni Yoshida for helpful discussion. SN acknowledges support from Stanford Graduate Fellowship. GS acknowledges support from a NSERC postgraduate scholarship.  This work was supported by the CIFAR and the Simons Foundation.

\bibliography{cecc}
\clearpage

\onecolumngrid
\appendix
\section{$G = U(1)$ and the code is continuous-variable}
\label{app:u1}
Here we provide an explicit $U(1)$-covariant $1\rightarrow 3$ encoding.
The construction presented in this section does not violate the no-go theorem stated in \emph{Case 1} above as the local systems are infinite dimensional. Since the symmetry group in question is $U(1)$, this code could be implemented in optical modes, and it is arguably more natural than the construction presented in \emph{Case 2}.
\par
We take the Hilbert space to be the space of functions on a circle using the position basis $\{\ket \phi\}_{\phi\in [0,2\pi)}$.
$U(1)$ acts on this space via the regular representation: if $g=e^{i\theta}\in U(1)$, then the action of the regular representation is defined by $U(g)\ket \alpha= \ket {\alpha+\theta}$.
It is convenient to work in the Fourier basis where the Hilbert space is described by the conjugate momentum basis $\{\ket n\}_{n\in \mathbb Z}$ and the group acts by $U(g)\ket n=e^{i n \theta} \ket n$.
We define the isometry to be the following operator expressed in the conjugate momentum basis
\begin{align*}
E_{U(1)}=\sum_{x,y\in \mathbb Z}{ \ket{-3y, -x+y, 2(y+x)}_{123}\bra {x}_\text{in}}.
\end{align*}
More explicitly, the isometry maps the state $\sum_x{\phi(x)\ket x}_\text{in}$ to $\ket \Psi_{123}=\sum_{x,y}{\phi(x)\ket{-3y,-x+y,2(y+x)}}_{123}$. It is easy to see that this isometry is $U(1)$-covariant:
\begin{align*}
U(g)^{\ot 3} E_{U(1)} U(g)^\dagger
&=e^{i(-3y-x+y+2(y+x))}E_{U(1)}e^{-ix} \\
&=E_{U(1)}
\end{align*}

Here we show, step by step, that this mapping can correct an erasure error. 
Consider the encoded density matrix
\begin{align*}
\Psi_{123}=\sum_{x_1,y_1,x_2,y_2\in \mathbb Z}{\phi(x_1)\phi(x_2)^* \ket{-3y_1, -x_1+y_1, 2(y_1+x_1)}\bra{-3y_2, -x_2+y_2, 2(y_2+x_2)}}_{123}.
\end{align*}
We will study the loss of modes $1$, $2$ and $3$, in turn.
\begin{enumerate}
\item \emph{Loss of the first mode}.
The resulting density matrix is
\begin{align*}
\Psi_{23}=\sum_{x_1,x_2,y\in \mathbb Z}{\phi(x_1)\phi(x_2)^* \ket {-x_1+y, 2(y+x_1)}\bra{-x_2+y, 2(y+x_2)}_{23}}.
\end{align*}
Decoding starts with the linear map $\ket{a,b} \rightarrow \ket{a,b-2a}$, yielding
\begin{align*}
\sum_{x_1,x_2,y\in \mathbb Z}{\phi(x_1)\phi(x_2)^* \ket {x_1+y,4x_1}\bra{x_2+y, 4x_2}_{23}}.
\end{align*}
We then use an isometry which maps the states of the form $\ket {a,4b}$ to $\ket {a,b}$
\begin{align*}
\sum_{x_1,x_2,y\in \mathbb Z}{\phi(x_1)\phi(x_2)^* \ket {x_1+y,x_1}\bra{x_2+y, x_2}_{23}}.
\end{align*}
Finally by $\ket{a,b}\rightarrow \ket {a-b,b}$, we obtain
\begin{align*}
\sum_{x_1,x_2,y\in \mathbb Z}{\phi(x_1)\phi(x_2)^* \ket {y,x_1}\bra{y, x_2}_{23}}.
\end{align*}
Therefore, tracing out mode $2$ reveals the original state.
\item \emph{Loss of the second mode}.
The resulting density matrix is
\begin{align*}
\Psi_{13}=\sum_{x_1,y,x_2\in \mathbb Z}{\phi(x_1)\phi(x_2)^* \ket{-3y,  2(y+x_1)}\bra{-3(-x_1+y+x_2), 2(-x_1+y+2x_2)}}_{13},
\end{align*}
or, equivalently by the change of variable $y\rightarrow y+x_1$,
\begin{align*}
\Psi_{13}=\sum_{x_1,y,x_2\in \mathbb Z}{\phi(x_1)\phi(x_2)^* \ket{-3(y+x_1),  2(y+2x_1)}\bra{-3(y+x_2), 2(y+2x_2)}}_{13}.
\end{align*}
We now use an isometry which maps states of the form $\ket {3a,2b}$ to $\ket {a,b}$ 
\begin{align*}
\sum_{x_1,y,x_2\in \mathbb Z}{\phi(x_1)\phi(x_2)^* \ket{-(y+x_1),  (y+2x_1)}\bra{-(y+x_2), (y+2x_2)}}_{13}.
\end{align*}
By $\ket{a,b}\rightarrow \ket {a,2a+b}$, we have
\begin{align*}
\sum_{x_1,y,x_2\in \mathbb Z}{\phi(x_1)\phi(x_2)^* \ket{-(y+x_1),  -y}\bra{-(y+x_2), y}}_{13}.
\end{align*}
We now use $\ket{a,b}\rightarrow \ket {-(a+b),b}$ to obtain
\begin{align*}
\sum_{x_1,y,x_2\in \mathbb Z}{\phi(x_1)\phi(x_2)^* \ket{x_1,  -y}\bra{x_2, y}}_{13}.
\end{align*}
Tracing out mode $3$ reveals the original state.
\item \emph{Loss of the third mode}.
Again, the resulting density matrix is
\begin{align*}
\Psi_{12}=\sum_{x_1,y,x_2\in \mathbb Z}{\phi(x_1)\phi(x_2)^* \ket{-3y, -x_1+y}\bra{-3(y+x_1-x_2), -2x_2+y+x_1}}_{12}.
\end{align*}
Using the change of variable $y \rightarrow y+x_1$ we have
\begin{align*}
\Psi_{12}=\sum_{x_1,y,x_2\in \mathbb Z}{\phi(x_1)\phi(x_2)^* \ket{-3(y-x_1), -2x_1+y}\bra{-3(y-x_2), -2x_2+y}}_{12}.
\end{align*}
Applying an isometry that maps $\ket {3a,b}$ to $\ket {a,b}$ yields
\begin{align*}
\sum_{x_1,y,x_2\in \mathbb Z}{\phi(x_1)\phi(x_2)^* \ket{-(y-x_1),- 2x_1+y}\bra{-(y-x_2), -2x_2+y}}_{12}.
\end{align*}
Using $\ket{a,b} \rightarrow \ket{a,a+b}$,
\begin{align*}
\sum_{x_1,y,x_2\in \mathbb Z}{\phi(x_1)\phi(x_2)^* \ket{-(y-x_1), -x_1}\bra{-(y-x_2), -x_2}}_{12}.
\end{align*}
Finally the isometry $\ket{a,b} \rightarrow \ket{a+b,-a}$ turns the state to
\begin{align*}
\sum_{x_1,y,x_2\in \mathbb Z}{\phi(x_1)\phi(x_2)^* \ket{-y, x_1}\bra{-y, x_2}}_{12}.
\end{align*}
Thus we can recover the state on mode $2$.
\end{enumerate}                                    

\section{$G$ is a finite group and the code is a random $G$-covariant isometry}\label{app:apprand}
In the construction presented for \emph{Case 3}, the local Hilbert space dimension can grow exponentially with $|G|$.  In this section we present an alternative, approximate method for error correction in which the local Hilbert space dimemsions are equal to $|G|$.
Our goal will be to prove~\cref{eq:conc}.
\par
Consider a $1\rightarrow n$ encoding. We will look for isometries that map $\mathcal H_{G} \rightarrow  \mathcal H_{G}^{\ot n}$, where $\mathcal H_G$ denotes the Hilbert space associated to the regular representation of $G$ with the basis $\{\ket g\}_{g\in G}$. Thus $\dim \mathcal H_G= |G|=d$. We represent the action of the regular representation of $g\in G$ on $\mathcal H_G$ by $U(g)$. 
\par
To construct a random covariant map, we start with a random invariant state $\ket \Psi \in \mathcal H_G^{\ot (n+1)}$. For our purposes: a random state is one that is chosen randomly with respect to the unitary invariant measure; random unitaries are unitaries chosen randomly with respect to the Haar measure; and a state is invariant if $U(g)^{\ot (n+1)} \ket \Psi = \ket \Psi$ for all $g\in G$. By projecting our chosen state onto an un-normalized, maximally entangled state $\ketum_{AB}=\sum{\ket i_A\ket i_B}$ we obtain a map $E$ (which is close to an isometry w.h.p.) from $\mathcal H_\text{in}\rightarrow \mathcal H^{\ot n}$,
\[ E_{\text{in},1\cdots n}=\sqrt d \braum_{\text{in},0} \ket \Psi_{0\cdots n}.\]
Note that the covariance of $E$ defined by $U(g)^{\ot n}  E =  E U(g)$, which follows from the invariance of $\ket \Psi$. From $E$ we can define the exact isometry $T$ as
\[T:=E ( E^\dagger E)^{-1/2}.\]
One can verify that $T$ is also a covariant map, since $[E^\dagger E,U(g)]$ for all $g\in G$. Our encoding is then defined by \[\Ep(\rho_\text{in})= T \rho_\text{in}T^\dagger.\]

With the covariant encoding in hand, we now turn our attention to the decoding.  Before diving in, let us first define two notational conventions that will be used frequently henceforth. Firstly, we will use $\tr_{\hat x}$ to indicate tracing out all subsystems \emph{except} the set $x$. Secondly, if there are two isomorphic Hilbert spaces $\mathcal H_\alpha$ and $\mathcal H_\beta$ with the same preferred basis, and if the operator $X_\alpha$ acts on $\mathcal H_\alpha$, then by $(X_\alpha)_{\beta}$ we mean the operator $X_\alpha$ acting on $\mathcal H_\beta$ (in the sense that the matrix corresponding to $X_\alpha$ is simply applied to $\mathcal H_\beta$). One can think of $(X_\alpha)_\beta$ as overriding the Hilbert space indices. When it is clear to do so, we use $X_\beta$ instead of $(X_\alpha)_\beta$ for brevity.
\par
To decode after loss of one of the modes, say mode $1$ without loss of generality, Bob first replaces the lost mode by a maximally mixed state $\tau_1$ and then decodes the state $\tau_1 \ot \tr_1\left[ \Ep (\rho_\text{in})\right]$.  The decoding map is given by
\begin{align*}
\sigma_\text{out}=\Dp_1(\rho_{12\cdots n})=\left(\tr_{\hat 2}\left[(U_{23}^T V_{23\cdots n})\rho_{12\cdots n}(U_{23}^T V_{23\cdots n})^\dagger \right]\right)_\text{out},
\end{align*}
where $U_{01}$, and $V_{23\cdots n}$ are unitaries that transform $\ket \Psi_{0\cdots n}$ into its Schmidt form: \[U_{01}\ot V_{2\cdots n} \ket \Psi_{0\cdots n}= \sum_{i,j}\sqrt{\lambda_{ij}} \ket {ij}_{01} \ot \ket{ij0\cdots 0}_{23\cdots n}, \]
and $U_{23}=(U_{01})_{23}$ is the same operator as $U_{01}$ but acting on the Hilbert spaces indexed by $2$ and $3$.  In other words, $U_{01}=\left( U_{23}\right)_{01}$.
\par 
With the decoding above, our task is now to prove~\cref{eq:conc}.  Our first step will be bounding the worst-case fidelity of recovery $F_\text{worst}$ in terms of the distance between $\Psi_{01}$ (the reduced density matrix of the invariant state $\ket \Psi$) and the maximally mixed state:
\begin{lem}\label{lem:lemma1}
For and $0\leq \epsilon \leq 1$, if $\left\|\Psi_{01}-\tau_{01}\right\|_\infty\leq \frac\epsilon {3d^2}$, then $1-\epsilon\leq F_\text{worst}$.
\end{lem}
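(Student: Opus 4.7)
The plan is to reduce to an idealized case $\Psi_{01}=\tau_{01}$ and then propagate perturbations. When the reduced density matrix of $\ket\Psi$ on modes $01$ is exactly maximally mixed, $\ket\Psi$ has a flat Schmidt spectrum of $d^2$ equal coefficients $1/d$ across the bipartition $01 \mid 2\cdots n$. In that idealized case, the encoding $E$ is already an isometry (so $T=E$), and the decoder inverts the encoding exactly, giving $F_\text{worst}=1$. The general statement follows by controlling how much everything moves as $\Psi_{01}$ moves away from $\tau_{01}$.

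First, I would verify the idealized case in detail. Starting from the Schmidt form $U_{01}\ot V_{2\cdots n}\ket\Psi=\frac{1}{d}\sum_{ij}\ket{ij}_{01}\ket{ij0\cdots 0}_{2\cdots n}$, a short matrix-element computation using $E=\sqrt d \braum_{\text{in},0}\ket\Psi$ gives $E^\dagger E = d\Psi_0^* = I_\text{in}$, so indeed $T=E$. Substituting the same Schmidt form into $\Dp_1\!\left(\tau_1\ot\tr_1[E\rho_\text{in} E^\dagger]\right)$ and using orthonormality of $\{\ket{ij0\cdots 0}_{2\cdots n}\}_{ij}$ collapses the expression to $\rho_\text{in}$ transported into the output mode, so recovery is perfect.

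Next, I would convert the hypothesis into norm bounds. Using $E^\dagger E=d\Psi_0^*$ together with the partial-trace inequality $\|\tr_1 X\|_\infty \leq d\|X\|_\infty$, we get $\|E^\dagger E-I_\text{in}\|_\infty\leq d^2\|\Psi_{01}-\tau_{01}\|_\infty\leq \epsilon/3$. From $T=E(E^\dagger E)^{-1/2}$ and spectral calculus applied to $(1+x)^{-1/2}$, this bounds $\|T-E\|_\infty$. Separately, the eigenvalues of $\Psi_{01}$ differ from $1/d^2$ by at most $\epsilon/(3d^2)$, so the Schmidt coefficients $\sqrt{\lambda_{ij}}$ differ from $1/d$ by at most $O(\epsilon/d)$, which bounds $\|E-E_\text{ideal}\|_\infty$ for the idealized-spectrum encoding $E_\text{ideal}$. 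Because $\Dp_1$ is CPTP (hence trace-norm contractive) and the erasure channel is fixed, plugging $T\approx E \approx E_\text{ideal}$ into $\Dp_1\circ \mathcal N_1 \circ \Ep$ and comparing with the idealized pipeline yields $\|\Dp_1\circ \mathcal N_1\circ\Ep(\rho_\text{in})-\rho_\text{in}\|_1$ small uniformly in $\rho_\text{in}$. The inequality $F(\ket\phi\!\bra\phi,\sigma)\geq 1-\tfrac12\|\ket\phi\!\bra\phi-\sigma\|_1$ then gives $F_\text{worst}\geq 1-\epsilon$.

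The main obstacle I anticipate is bookkeeping. The denominator $3d^2$ in the hypothesis suggests that the total error splits naturally into three pieces---the isometry correction $T$ vs $E$, the Schmidt-spectrum perturbation $E$ vs $E_\text{ideal}$, and a residual cross term from the partial trace in the decoder---each of which contributes an $\epsilon/3$-type bound after the $d^2$ cancels the hypothesis. Choosing the right norm at each step (operator norm for the reduced density matrices, induced/trace norm for the channels, and a pure-state fidelity bound at the very end) is essential to avoid picking up spurious powers of $d$ when passing between Schmidt-level estimates, operator-norm estimates of $E$ and $T$, and the final trace-norm estimate on the output state.
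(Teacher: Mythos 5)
Your route is genuinely different from the paper's. The paper never introduces an idealized encoder: it derives an exact closed-form expression for the recovered state, $\Dp_1(\tau_1\ot \Ep(\rho_\text{in}))=\tr_1\bigl({\Psi_{01}^{T}}^{1/2}{\Psi_0^T}^{-1/2}\rho_0\,{\Psi_0^T}^{-1/2}{\Psi_{01}^{T}}^{1/2}\bigr)$, reduces to pure inputs by joint concavity, lower-bounds the fidelity by the overlap $\bigl|\bra\kappa_0\tr_1\bigl(\Psi_{01}^{1/2}\bigr)\bigl(\Psi_0^{-1/2}/\sqrt d\bigr)\ket\kappa_0\bigr|$ via a Schatten-norm duality step, and finishes with the two perturbation bounds $\|\tr_1(\Psi_{01}^{1/2})-I_0\|_\infty\le d^2\|\Psi_{01}-\tau_{01}\|_\infty$ and $\|\Psi_0^{-1/2}/\sqrt d-I_0\|_\infty\le d\|\Psi_0-\tau_0\|_\infty$, giving $F_\text{worst}\ge 1-2d^2\|\Psi_{01}-\tau_{01}\|_\infty-(d^2\|\Psi_{01}-\tau_{01}\|_\infty)^2\ge 1-\epsilon$. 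Your perturbation-around-an-exact-code argument buys conceptual clarity: the ideal pipeline is manifestly exact, and monotonicity of fidelity (or trace distance) under the fixed erase-and-decode channel replaces the explicit Schatten-norm manipulation. The price is that you must control three operator distances ($T$ vs.\ $E$, $E$ vs.\ $E_\text{ideal}$, and the output comparison) instead of two, and you must be careful that $E_\text{ideal}$ is built from the \emph{same} Schmidt unitaries $U_{01},V_{2\cdots n}$ as the actual $\ket\Psi$, since the decoder is defined from those unitaries; only then is the mixed pipeline (ideal encoder, actual decoder) exactly invertible.

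The one step that is genuinely under-justified---and it is the crux, not bookkeeping---is the bound on $\|E-E_\text{ideal}\|_\infty$. The estimate you gesture at (Schmidt coefficients within $\epsilon/(3d)$ of $1/d$, hence $\|\ket\Psi-\ket{\Psi_\text{ideal}}\|_2\le\epsilon/3$, hence a bound on the encoder difference) gives, if applied naively through $E=\sqrt d\,\braum_{\text{in},0}\ket\Psi$, only $\|E-E_\text{ideal}\|_\infty\le\sqrt d\,\epsilon/3$ (or worse), which blows the budget: the hypothesis supplies exactly the two powers of $d$ consumed by $E^\dagger E=d\Psi_0^T$ and $\|\Psi_0-\tau_0\|_\infty\le d\|\Psi_{01}-\tau_{01}\|_\infty$, with none to spare. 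The gap is repairable: writing $(E-E_\text{ideal})\ket\chi=\sqrt d\,\bra{\chi^*}_0(\ket\Psi-\ket{\Psi_\text{ideal}})_{0\cdots n}$ gives $\|(E-E_\text{ideal})\ket\chi\|^2=d\,\bra{\chi^*}\Delta_0\ket{\chi^*}$ with $\Delta_0=\tr_{1\cdots n}\bigl[(\ket\Psi-\ket{\Psi_\text{ideal}})(\bra\Psi-\bra{\Psi_\text{ideal}})\bigr]$; since the difference vector is diagonal in the joint Schmidt basis with coefficients at most $\epsilon/(3d)$, one gets $\|\Delta_{01}\|_\infty\le\epsilon^2/(9d^2)$, the partial-trace inequality gives $\|\Delta_0\|_\infty\le\epsilon^2/(9d)$, and hence $\|E-E_\text{ideal}\|_\infty\le\epsilon/3$ with no stray $d$. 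Combined with $\|T-E\|_\infty\le\|I-(E^\dagger E)^{1/2}\|_\infty\lesssim\epsilon/5$ (using that $T$ is an isometry) and $|\bra\psi T^\dagger E_\text{ideal}\ket\psi|\ge 1-\|T-E_\text{ideal}\|_\infty$ followed by monotonicity of fidelity, the constants do clear $1-\epsilon$, so your plan closes once this estimate is made explicit.
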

\begin{proof}
We will prove in three steps:
\begin{itemize}
\item \textbf{Step 1.} We first simplify the expression for the recovered state and show that \[ \Dp_1(\tau_1\ot \Ep(\rho_\text{in}))=\tr_1\left({{\Psi_{01}^{T}}^{1/2}}{\Psi_0^T}^{-1/2}(\rho_\text{in})_0 {\Psi_0^T}^{-1/2} {\Psi_{01}^{T}}^{1/2}\right) .\]
\item \textbf{Step 2.} We then use joint concavity of the fidelity, and properties of the Schatten norm to bound the worst-case fidelity \[F_\text{worst}\geq \min_{\ket\kappa}{\left|\bra{\kappa}_0 \tr_1 {\left( \Psi_{01}^{1/2}\right)}\left( \frac{\Psi_0^{-1/2}}{\sqrt d}\right) \ket{\kappa}_0\right|} .\] 
From the above equation, it is already clear that if $\Psi_0$ and $\Psi_{01}$ are close to the maximally mixed state, then the worst-case fidelity will be close to $1$. We quantify this in the last step.
\item \textbf{Step 3.} We show that for $0\leq \epsilon \leq 1$, if $\left\|\Psi_{01}-\tau_{01}\right\|_\infty\leq \frac\epsilon {3d^2}$, then $1-\epsilon\leq F_\text{worst}$.
\end{itemize}
\par\ssection{\textbf{Step 1}}\par
We begin with the expression for the recovered state,
\begin{align}\label{eq:dec_enc_app}
\Dp_1(\tau_1\ot \Ep(\rho_\text{in}))=\tr_{\hat 2}\left(U_{23}^T V_{23 \cdots n} \tr_1(T\rho_\text{in} T^\dagger) V_{23 \cdots n}^\dagger U_{23}^*\right).
\end{align} 
Using the fact that $E^\dagger E=d(\Psi_0^T)_\text{in}$, and the definition $\tilde \rho_\text{in}= \frac{1}{d} \left({\Psi_0^T}^{-1/2}\right)_\text{in}\rho_\text{in}\left( {\Psi_0^T}^{-1/2}\right)_\text{in}$, we have that $T \rho_\text{in} T^\dagger=E\tilde \rho_\text{in} E^\dagger$ . 
From the definition of $E$ we can simplify the formula for the encoding map:
\begin{align*}
\Ep(\rho_\text{in}) = E \tilde \rho_\text{in} E^\dagger =d \tr_0 \left({\ket \Psi\bra \Psi_{0 \cdots n} \tilde \rho_0^T}  \right)
\end{align*}
Therefore,
\begin{align} \label{eq:dec_semi_ex}
\Dp_1(\tau_1\ot \Ep(\rho_\text{in}))=d\tr_{\hat 2}\left[U_{23}^T V_{2\cdots n} \ket \Psi \bra \Psi_{0\cdots n} V^\dagger_{2\cdots n} U^{*}_{23} \tilde \rho_0^T\right].
\end{align}
However, recall that $U_{01}\ot V_{2\cdots n} \ket \Psi_{0\cdots n}= \sum_{i,j} \sqrt{\lambda_{ij}}\ket {ij}_{01} \ot \ket{ij0\cdots 0}_{23\cdots n}$, and that $U_{01} \Psi_{01}^{1/2}U_{01}^\dagger= \sum_{ij}\sqrt{\lambda_{ij}}\ket{ij}\bra{ij}$.  Thus we obtain 
\[V_{2\cdots n} \ket \Psi_{0\cdots n} = \Psi_{01}^{1/2}U_{01}^\dagger \ket \phi^+_{02} \ket \phi^+_{13}\ket{0\cdots 0}_{4 \cdots n}= U_{23}^* \left({\Psi_{01}^T}^{1/2}\right)_{23} \ket {\phi^+}_{02} \ket {\phi^+_{13}}\ket{0\cdots 0}_{4 \cdots n}. \]
Using~\cref{eq:dec_semi_ex}, we find 
\begin{align*}
&\Dp_1(\tau_1\ot \Ep(\rho_\text{in}))\\
&=d\tr_{\hat 2}\left( \left({\Psi_{01}^{1/2}}^T\right)_{23}\ket{\phi^+}_{02}\ket{\phi^+}_{13}\bra{\phi^+}_{02}\bra{\phi^+}_{13} \left({\Psi_{01}^{1/2}}^T\right)_{23} \tilde \rho_0^T\right)\\
&=d\tr_{3}\left( \left({\Psi_{01}^{1/2}}^T\right)_{23} \left(\tilde \rho_0\right)_2  \left({\Psi_{01}^{1/2}}^T\right)_{23} \right)\\
&=\tr_1\left({{\Psi_{01}^{T}}^{1/2}}{\Psi_0^T}^{-1/2}\rho_0 {\Psi_0^T}^{-1/2} {\Psi_{01}^{T}}^{1/2}\right).
\end{align*}
Therefore, we have achieved the goal of step 1.
\par
\ssection{\textbf{Step 2}}\par
Our goal now is to lower bound the fidelity of recovery. Since the fidelity is jointly concave, we know that the minimum fidelity of recovery for the channel is achieved with a pure input state, say $\rho_0=(\ket \kappa \bra \kappa)^T$, where we have added the transpose to simplify the expressions. In this case, the recovered state takes the following form: 
\begin{equation*}
\Dp_1(\tau_1\ot \Ep(\rho_\text{in}))=\tr_1\left({\Psi_{01}^{1/2}}{\Psi_0}^{-1/2}\ket \kappa_0\bra \kappa_0  {\Psi_0}^{-1/2} {\Psi_{01}^{1/2}}\right)^T,
\end{equation*}
so that the minimum fidelity is
\begin{align*}
F_{min}=\min_{\ket\kappa}{\sqrt{\tr\left({\bra \kappa_0 \Psi_{01}^{1/2} \Psi_0^{-1/2}\ket \kappa_0 \bra \kappa_0 \Psi_0^{-1/2}\Psi_{01}^{1/2}\ket \kappa_0}\right)}}=\min_{\ket \kappa}\left({ \left\|\bra \kappa_0 \Psi_{01}^{1/2} \Psi_0^{-1/2}\ket \kappa_0\right\|_2}\right).
\end{align*}
To proceed, we use the following basic property of the Schatten norm: for $\frac1p+\frac1q=1$, $\| Y\|_p \geq  |\tr (X Y^\dagger)|$ if $\|X\|_q = 1$. Applying this inequality when $X=I_1/\sqrt d$ and $p=q=2$ we find:
\begin{align}\label{eq:step2}
\left\|\bra \kappa \Psi_{01}^{1/2} \Psi_0^{-1/2}\ket \kappa\right\|_2
&=\max \left\{\left| \tr\left(X_1\bra {\kappa}_0 \Psi_{01}^{1/2} \Psi_0^{-1/2}\ket {\kappa}_{0}\right)\right| \big|\,\,\,\| X \|_2=1\right\}\nonumber \\
&\geq \frac{1}{\sqrt d}\left| \tr\left(\bra {\kappa}_0 \Psi_{01}^{1/2} \Psi_0^{-1/2}\ket {\kappa}_{0}\right)\right|\nonumber \\
&= \left|\bra{\kappa}_0 \tr_1 {\left( \Psi_{01}^{1/2}\right)}\left( \frac{\Psi_0^{-1/2}}{\sqrt d}\right) \ket{\kappa}_0\right|.
\end{align}
This conludes step 2.
\par\ssection{\textbf{Step 3}}\par
We would ultimately like to lower bound the worst-case fidelity using concentration of measure techniques for $\Psi_{01}$ and $\Psi_0$. 
\par
We start by upper bounding $\left \| \tr_1 \left(\Psi_{01}^{1/2}\right) - I_0 \right \|_\infty$ and $\left \|\left(\frac{\Psi_{0}^{-1/2}}{\sqrt d}\right) -  I_{0} \right \|_\infty$, assuming that $\left \| \Psi_0 - \tau_0 \right \|_\infty \leq \frac{1}{2d}$.
\begin{enumerate}
\item Upper bound for $\left \| \tr_1 \left(\Psi_{01}^{1/2}\right) - I_0 \right \|_\infty$:
\par
\begin{align*}
\left \| \tr_1 \left(\Psi_{01}^{1/2}\right) - I_0 \right \|_\infty
&=\left \| \tr_1 \left(\Psi_{01}^{1/2}-\frac{I_{01}}{d}\right) \right \|_\infty
=\max_{\ket \alpha} \left| \bra \alpha_0\tr_1 \left(\Psi_{01}^{1/2}-\frac{I_{01}}{d}\right)\ket \alpha_0\right|\\
&\leq \max_{\ket \alpha}\sum_{g \in G} \left| \bra \alpha_0\bra g_1 \tr_1 \left(\Psi_{01}^{1/2}-\frac{I_{01}}{d}\right)\ket \alpha_0\ket g_1\right|\\
&\leq d\left \|\Psi_{01}^{1/2} - \frac{I_{01}}{ d} \right \|_\infty,
\end{align*}
where $\ket g$, $g \in G$ form a basis for evaluating the trace, the first inequality is the triangle inequality, and the second inequality comes from the fact that the infinite Schatten norm of a Hermitian operator is equal to its maximum eigenvalue. Now, one can check that for any $\lambda \geq 0$, $|\lambda^{1/2} - 1/d| \leq d|\lambda - 1/d^2|$. Taking $\{\lambda_i\}$ to be the set of eigenvalues of $\Psi_{01}^{1/2}$, and using the aforementioned inequality, we obtain
\begin{align}\label{eq:bound1}
\left \|\Psi_{01}^{1/2} - \frac{I_{01}}{d} \right \|_\infty= \max_{i} {\left| \lambda_i^{1/2} - \frac1d\right|} \leq d\max_{i} {\left| \lambda_i - \frac{1}{d^2}\right|}\leq d \left \| \Psi_{01}- \tau_{01} \right \| _\infty.
\end{align}
Thus \[ \left \| \tr_1 \left(\Psi_{01}^{1/2}\right) - I_0 \right \|_\infty \leq d^2 \left \| \Psi_{01}- \tau_{01} \right \| _\infty.\]
\item Upper bound for $\left \|\left(\frac{\Psi_{0}^{-1/2}}{\sqrt d}\right) -  I_{0} \right \|_\infty$:
\par 
One can simply check that for any real number $\lambda$ such that $|\lambda - 1/d| \leq 1/{2d}$, then $\left| \lambda^{-1/2}/{\sqrt d} -1 \right|\leq d\left|\lambda - {1}/{d} \right|$. In particular, since this inequality holds for all of the eigenvalues of $\Psi_0$, we can derive the following bound for the operator norm:
\begin{align}\label{eq:bound2}
\left \|\left(\frac{\Psi_{0}^{-1/2}}{\sqrt d}\right) -  I_{0} \right \|_\infty \leq d \left \| \Psi_0 - \tau_0 \right \|_\infty
\end{align} 
\end{enumerate}
To proceed, we \emph{assume} that $\left \| \Psi_{01} - \tau_{01} \right \|_\infty \leq \frac{ \epsilon}{3d^2}$ for $0\leq \epsilon \leq 1$.  Combining this assumption with~\cref{eq:step2},
 we have
\begin{align*}
& \left|\bra{\kappa}_0 \tr_1 {\left( \Psi_{01}^{1/2}\right)}\left( \frac{\Psi_0^{-1/2}}{\sqrt d}\right) \ket{\kappa}_0\right|\nonumber\\
&= \left|1 + \bra {\kappa}_0\left[ \tr_1 \left(\Psi_{01}^{1/2}\right) - I_0 \right] \ket{\kappa}_0  + \bra {\kappa}_0\left[\left(\frac{\Psi_{0}^{-1/2}}{\sqrt d}\right) -  I_{0}\right] \ket{\kappa}_0  + \bra {\kappa}_0\left[ \tr_1 \left(\Psi_{01}^{1/2}\right) - I_0 \right]\left[\left(\frac{\Psi_{0}^{-1/2}}{\sqrt d}\right) -  I_{0}\right] \ket{\kappa}_0 \right|\nonumber\\
&\geq 1 -\left| \bra {\kappa}_0\left[ \tr_1 \left(\Psi_{01}^{1/2}\right) - I_0 \right] \ket{\kappa}_0\right|  - \left| \bra {\kappa}_0\left[\left(\frac{\Psi_{0}^{-1/2}}{\sqrt d}\right) -  I_{0}\right] \ket{\kappa}_0 \right| - \left| \bra {\kappa}_0\left[ \tr_1 \left(\Psi_{01}^{1/2}\right) - I_0 \right]\left[\left(\frac{\Psi_{0}^{-1/2}}{\sqrt d}\right) -  I_{0}\right] \ket{\kappa}_0 \right|\nonumber,
\end{align*}
where the inequality in the last line is the triangle inequality.  Now, since $\bra \kappa X \ket \kappa \leq \|X\|_\infty$ for any matrix $X$, we have
\begin{align*}
 \left|\bra{\kappa}_0 \tr_1 {\left( \Psi_{01}^{1/2}\right)}\left( \frac{\Psi_0^{-1/2}}{\sqrt d}\right) \ket{\kappa}_0\right|\nonumber
&\geq 1 -\left\|  \tr_1 \left(\Psi_{01}^{1/2}\right) - I_0 \right\|_\infty  - \left\| \left(\frac{\Psi_{0}^{-1/2}}{\sqrt d}\right) -  I_{0}\right\|_\infty - \left\|\left[ \tr_1 \left(\Psi_{01}^{1/2}\right) - I_0 \right]\left[\left(\frac{\Psi_{0}^{-1/2}}{\sqrt d}\right) -  I_{0}\right] \right\|_\infty\nonumber\\
&\geq 1- \left| \tr_1 \left(\Psi_{01}^{1/2}\right) - I_0 \right \|_\infty -\left \|\left(\frac{\Psi_{0}^{-1/2}}{\sqrt d}\right) -  I_{0} \right \|_\infty-\left \| \tr_1 \left(\Psi_{01}^{1/2}\right) - I_0 \right \|_\infty \left \|\left(\frac{\Psi_{0}^{-1/2}}{\sqrt d}\right) -  I_{0} \right \|_\infty \nonumber,
\end{align*}
where the second inequality follows from the fact that $\|XY\|_\infty \leq \|X\|_\infty\|Y\|_\infty$ for any pair of matrices $X$ and $Y$. Using~\cref{eq:bound1,eq:bound2} above,
\begin{equation*}
\left|\bra{\kappa}_0 \tr_1 {\left( \Psi_{01}^{1/2}\right)}\left( \frac{\Psi_0^{-1/2}}{\sqrt d}\right) \ket{\kappa}_0\right|\nonumber
\geq 1- d^2 \left \| \Psi_{01}- \tau_{01} \right \| _\infty -d \left \| \Psi_0 - \tau_0 \right \|_\infty -\left(d^2\left \| \Psi_{01}- \tau_{01} \right \| _\infty\right) \left(d \left \| \Psi_0 - \tau_0 \right \|_\infty \right)\nonumber.
\end{equation*}
Note that the condition $\left \| \Psi_0 - \tau_0 \right \|_\infty \leq \frac{1}{2d}$ is satisfied, since $\left \| \Psi_0 - \tau_0 \right \|_\infty\leq d\left \| \Psi_{01}- \tau_{01} \right \| _\infty$ and $\left \| \Psi_{01} - \tau_{01} \right \|_\infty \leq \frac{ \epsilon}{3d^2}$. Finally, since $\left \| \Psi_0 - \tau_0 \right \|_\infty\leq d\left \| \Psi_{01}- \tau_{01} \right \| _\infty$, we have that
\begin{equation*}
\left|\bra{\kappa}_0 \tr_1 {\left( \Psi_{01}^{1/2}\right)}\left( \frac{\Psi_0^{-1/2}}{\sqrt d}\right) \ket{\kappa}_0\right|\nonumber
\geq 1- 2 d^2 \left \| \Psi_{01}- \tau_{01} \right \| _\infty- \left(d^2\left \| \Psi_{01}- \tau_{01} \right \| _\infty\right)^2,
\end{equation*}
and we therefore conclude that
\begin{equation*}
F_\text{worst}
\geq \left|\bra{\kappa}_0 \tr_1 {\left( \Psi_{01}^{1/2}\right)}\left( \frac{\Psi_0^{-1/2}}{\sqrt d}\right) \ket{\kappa}_0\right| 
\geq 1- 2 d^2 \left \| \Psi_{01}- \tau_{01} \right \| _\infty- \left(d^2\left \| \Psi_{01}- \tau_{01} \right \| _\infty\right)^2
\geq 1-\epsilon\,,
\end{equation*}
which proves the lemma.
\end{proof}                               
\par
To complete the proof, it remains to be shown that our assumption is valid.  Specifically, in order to show that the worst-case fidelity is close to $1$, it suffices to prove that the reduced density matrix of random invariant states, $\Psi_{01}$, is very close to the maximally mixed state in operator norm (\emph{i.e.}, $\|\Psi_{01}-\tau_{01}\|_\infty$ is small) with high probability.  Since
\begin{equation*}
\| \Psi_{01}-\tau_{01}\|_\infty = \max_{\sigma_{01}} {\left| \tr \left[\sigma_{01} (\Psi_{01} - \tau_{01})\right]\right|},
\end{equation*}
where the maximization is done over all possible density matrices $\sigma$, we can instead study the quantity on the right hand side. To show that this is small, we will follow the techniques used in~\cite{hayden2006aspects,harrow2004superdense,hayden2004randomizing,bennett2005remote}. 
\par
Before stating the proof in its full glory, let us first gain an imprecise, high-level overview of the strategy.  We will first define an $\epsilon$-net on the set of density matrices on $\mathcal H_0 \ot \mathcal H_1$, \emph{i.e.,} a finite set of density matrices $\tilde\sigma_{01}$ such that any other density matrix $\sigma_{01}$ is close to one of the elements of the net in the trace norm. If we can then show that $\left| \tr \left[\tilde \sigma_{01} (\Psi_{01} - \tau_{01})\right]\right|$ is small for every $\tilde \sigma$ in the net, then it must be small for \emph{all} density matrices $\sigma_{01}$. Using large deviation methods, we will then prove that for any fixed density matrix $\sigma_{01}$ (including the elements of the net), $\left| \tr \left[\sigma_{01} (\Psi_{01} - \tau_{01})\right]\right|$ is small with very high probability. Since the number of elements in the net is finite (with a known upper bound), we can then use a union bound to show that $\left| \tr \left[\sigma_{01} (\Psi_{01} - \tau_{01})\right]\right|$ is small for all elements in the net with high probability.  Therefore, we can bound $\left| \tr \left[\sigma_{01} (\Psi_{01} - \tau_{01})\right]\right|$, arriving at our desired conclusion. 
\par
We will now we give a detailed proof of~\cref{eq:conc}. Let $\mathcal P_{\delta,\sigma_{01}}$ be the probability that, for a fixed $\sigma_{01}$, $\left| \tr \left[\sigma_{01} (\Psi_{01} - \tau_{01})\right]\right|\geq \delta/d^2$, and let $\mathcal P_\delta=\max_{\sigma_{01}} \mathcal P_{\delta,\sigma_{01}}$, where the maximum is over all density matrices on $\mathcal H_0 \ot \mathcal H_1$.  
The following lemma relates $\text{P}\left(\left\| \Psi_{01}- \tau_{01} \right\|_\infty \leq \frac{\epsilon}{3d^2}\right)$ to $\mathcal P_{\delta}$.
\begin{lem}\label{lem:lemma2}
For $0\leq \alpha \leq \epsilon$, we have 
\begin{equation*}
\text{P}\left(\left\| \Psi_{01}- \tau_{01} \right\|_\infty \leq \frac{\epsilon}{3d^2}\right)\geq 1- \mathcal P_{\frac{\epsilon-\alpha}3}\cdot\left[ \frac{15d^2}{\alpha}\right]^{2d^2}.
\end{equation*}
\end{lem}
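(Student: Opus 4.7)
The plan is to execute the informal strategy sketched just above the lemma statement: rewrite the operator norm as a variational maximum, replace that maximum by one over a finite $\delta$-net of pure states, and then close with a union bound.

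First, I would use that $M:=\Psi_{01}-\tau_{01}$ is Hermitian to write
\[
\|M\|_\infty \;=\; \max_{\ket\psi}\,\bigl|\bra\psi M\ket\psi\bigr|,
\]
with $\ket\psi$ a unit vector in $\mathcal H_0\otimes\mathcal H_1$. Equivalently, since $\sigma\mapsto \tr[\sigma M]$ is linear, the maximum in $\|M\|_\infty=\max_{\sigma_{01}}|\tr[\sigma_{01} M]|$ is attained at an extreme point of the set of density matrices, i.e., a rank-one projector. This reduction is essential: netting only pure states means the net lives in $\mathbb C^{d^2}\cong \mathbb R^{2d^2}$ and produces the exponent $2d^2$ in $|\mathcal N|$; netting arbitrary density matrices would instead require an exponent of order $d^4$ and would destroy the bound.

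Next, I would fix a Euclidean $\delta$-net $\mathcal N$ of the unit sphere in $\mathbb C^{d^2}$, using the standard volumetric bound $|\mathcal N|\le (1+2/\delta)^{2d^2}$. For any unit vector $\ket\psi$ and its nearest $\ket{\tilde\psi}\in\mathcal N$, a short triangle-inequality estimate gives
\[
\bigl|\bra\psi M\ket\psi - \bra{\tilde\psi}M\ket{\tilde\psi}\bigr|\;\le\;2\delta\,\|M\|_\infty,
\]
and $\|M\|_\infty\le \|\Psi_{01}\|_\infty+\|\tau_{01}\|_\infty\le 1+1/d^2$ renders the right-hand side of order $\delta$. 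Choosing $\delta$ of order $\alpha/d^2$ (say $\delta=\alpha/(12 d^2)$) forces the slack to be at most $\alpha/(3d^2)$, so the event $\|M\|_\infty>\epsilon/(3d^2)$ implies that some $\ket{\tilde\psi}\in\mathcal N$ must already satisfy $|\bra{\tilde\psi}M\ket{\tilde\psi}|>(\epsilon-\alpha)/(3d^2)$.

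Finally, I would apply a union bound. For each fixed $\ket{\tilde\psi}\in \mathcal N$, the probability of the latter event is at most $\mathcal P_{(\epsilon-\alpha)/3}$ by definition, so summing over the net gives a failure probability of at most $|\mathcal N|\,\mathcal P_{(\epsilon-\alpha)/3}$. Under the hypothesis $0\le\alpha\le\epsilon\le 1$, the bound $(1+2/\delta)^{2d^2}$ with $\delta\sim \alpha/d^2$ is absorbed into the form $(15 d^2/\alpha)^{2d^2}$, yielding the claimed inequality. The one step that requires any thought is the reduction to pure states in the opening line; without it the net would need exponent $\Omega(d^4)$, and the concentration bound would no longer suffice to push $F_\text{worst}$ close to $1$ for modest $n$. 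The remaining work is routine $\epsilon$-net and union-bound bookkeeping, with only the numerical constant (the ``$15$'' in the final expression) requiring a little care.
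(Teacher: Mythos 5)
Your proposal follows essentially the same route as the paper's proof: reduce the operator norm of $M=\Psi_{01}-\tau_{01}$ to a maximum of $|\bra\psi M\ket\psi|$ over pure states, discretize the pure states with a net of cardinality roughly $(\mathrm{const}\cdot d^2/\alpha)^{2d^2}$, show that control on the net implies control on all states up to slack $\alpha/(3d^2)$, and finish with a union bound invoking the definition of $\mathcal P_{(\epsilon-\alpha)/3}$. The only structural difference is that the paper takes an $\frac{\alpha}{3d^2}$-net in trace distance on pure states and cites the cardinality bound $(15d^2/\alpha)^{2d^2}$ from Lemma II.4 of the randomizing-states paper, whereas you construct a Euclidean $\delta$-net on the unit sphere of $\CC^{d^2}$ and prove the continuity estimate $|\bra\psi M\ket\psi-\bra{\tilde\psi}M\ket{\tilde\psi}|\le 2\delta\|M\|_\infty$ yourself; both are legitimate, and your explicit remark that only pure states need to be netted (which the paper leaves implicit) is correct and is indeed where the exponent $2d^2$ comes from.

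One piece of bookkeeping does not close as written: with $\|M\|_\infty\le 1+1/d^2$ and $\delta=\alpha/(12d^2)$, the volumetric bound gives $|\mathcal N|\le(1+24d^2/\alpha)^{2d^2}$, which is strictly larger than the $(15d^2/\alpha)^{2d^2}$ appearing in the lemma, so the stated constant is not actually "absorbed." The fix is minor: since $\Psi_{01}\ge 0$ has unit trace, $\|\Psi_{01}-\tau_{01}\|_\infty\le\max(1-1/d^2,\,1/d^2)\le 1$ (this is also the bound the paper uses when it drops the factor $\|\Psi_{01}-\tau_{01}\|_\infty$ against $\|\sigma_{01}-\tilde\sigma_{01}\|_1$), so the slack is at most $2\delta$ and you may take $\delta=\alpha/(6d^2)$, giving $|\mathcal N|\le(1+12d^2/\alpha)^{2d^2}\le(15d^2/\alpha)^{2d^2}$ because $\alpha\le 1\le 3d^2$; alternatively, simply cite the trace-distance net of size $(15d^2/\alpha)^{2d^2}$ as the paper does. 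With that adjustment your argument is complete and equivalent to the paper's.
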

\begin{proof}
Consider an $\frac{\alpha}{3d^2}$-trace distance net $\mathcal M$ of pure states in $\mathcal H_0\ot \mathcal H_1$, with $\alpha \leq\epsilon$.
For every pure state $\sigma_{01}$, there exists a pure state $\tilde \sigma _{01}$ such that
\begin{equation}\label{eq:netdef}
\| \sigma_{01}- \tilde \sigma_{01}\|_1 \leq \frac{\alpha}{3d^2},
\end{equation} 
by definition.  It is known that we can choose $\mathcal M$ such that $|\mathcal M|\leq \left(\frac{15d^2}{\alpha} \right)^{2d^2}$~\cite[Lemma II.4]{hayden2004randomizing}. 
Now if $\left|\tr(\tilde \sigma_{01}[\Psi_{01}-\tau_{01}]) \right|\leq \frac{\epsilon - \alpha}{3d^2}$, then from~\cref{eq:netdef} it follows that
\begin{align*}
\Big|\tr(\sigma_{01}[\Psi_{01}-\tau_{01}]) \Big|
&\leq \Big|\tr(\tilde \sigma_{01}[\Psi_{01}-\tau_{01}]) \Big| + \Big|\tr((\sigma_{01}-\tilde \sigma_{01})[\Psi_{01}-\tau_{01}]) \Big|\\
&\leq \frac{\epsilon - \alpha}{3d^2} + \| \sigma_{01} - \tilde \sigma_{01} \|_1 \|\Psi_{01}- \tau_{01}\|_\infty \\
&\leq \frac{\epsilon - \alpha}{3d^2} +\| \sigma_{01} - \tilde \sigma_{01} \|_1\\
&\leq \frac{\epsilon}{3d^2}.
\end{align*}
Therefore, 
\begin{align}\label{eq:medsteps}
\text{P}\left(\left\| \Psi_{01}- \tau_{01} \right\|_\infty \leq \frac{\epsilon}{3d^2} \right)\nonumber
&=\text{P}\left(\forall \sigma_{01} : \Big|\tr\left( \sigma_{01}[\Psi_{01}- \tau_{01}]\right) \Big| \leq \frac{\epsilon}{3d^2} \right)\nonumber\\
&\geq\text{P}\left(\forall \tilde \sigma_{01}\in \mathcal M : \Big|\tr\left( \tilde\sigma_{01}[\Psi_{01}-\tau_{01}]\right) \Big| \leq \frac{\epsilon-\alpha}{3d^2} \right)\nonumber\\
&=1- \text{P}\left(\exists \tilde \sigma_{01}\in \mathcal M : \Big|\tr\left( \tilde\sigma_{01}[\Psi_{01}-\tau_{01}]\right)\Big| \geq \frac{\epsilon-\alpha}{3d^2} \right).
\end{align}
We can simplify~\cref{eq:medsteps} using a union bound:
\begin{align*}
\text{P}\left(\exists \tilde \sigma_{01}\in \mathcal M : \Big|\tr\left( \tilde\sigma_{01}[\Psi_{01}-\tau_{01}]\right)\Big| \geq \frac{\epsilon-\alpha}{3d^2} \right)
\leq \sum_{\tilde \sigma_{01} \in \mathcal M} \mathcal P_{\frac{\epsilon-\alpha}3,\tilde \sigma_{01}}
\leq \mathcal P_{\frac{\epsilon-\alpha}3} \cdot  |\mathcal M| 
\end{align*}
This, along with~\cref{eq:medsteps}, conclude the proof of the lemma.
\end{proof}
In~\cref{app:tr_conc}, we will use large deviation techniques to show that
\begin{equation}\label{eq:appres}
\mathcal P_{\delta} \leq \exp \left(-d^{n-2}\delta^2/6 \right)\qquad\text{for}\qquad 0\leq \delta \leq 1. 
\end{equation}
We will defer the proof to \cref{app:tr_conc} but use the result immediately.  Combining~\cref{lem:lemma1},~\cref{lem:lemma2} and~\cref{eq:appres} we have
\begin{align*}
\text{P}\left(F_\text{worst} \leq 1-\epsilon \right)
&\leq \text{P}\left(\left\| \Psi_{01}- \tau_{01} \right\|_\infty \geq \frac{\epsilon}{3d^2} \right)
\leq \min_{0\leq \alpha \leq \epsilon}  \mathcal P_{\frac{\epsilon-\alpha}3} \cdot  \left[ \frac{15d^2}{\alpha}\right]^{2d^2} \\
&\leq \min_{0\leq \alpha \leq \epsilon}\exp\left(-d^{n-2} (\epsilon-\alpha)^2/54 +2d^2 \log \left( \frac{15d^2}{\alpha}\right)\right).
\end{align*}
One convenient choice of $\epsilon$ and $\alpha$ is $\epsilon =d^{\frac{9-2n}{8}}$ and $\alpha=\epsilon/2$. With this choice we find
\begin{align*}
\text{P}\left(F_\text{worst} \geq 1-\epsilon \right) \leq \exp\left(-\frac{d^2}{216} \left[d^{\frac{2n-8}{4}}-432\log\left(30d^{\frac{7+2n}{8}} \right) \right] \right),
\end{align*}
which reduces to~\cref{eq:conc} after substituting $|G|$ for $d$.

\section{Proof of~\cref{eq:appres}}\label{app:tr_conc}
The goal of this appendix is to prove~\cref{eq:appres}. The discussion is split into two parts: we first explain the random invariant state construction, and then we prove the desired bound.
\par
\ssection{Construction of random invariant states}
Consider the invariant subspace of $\mathcal H^{\ot (n+1)}$ -- it is easy to see that the invariant subspace is spanned by states of the form 
\begin{align*}
\frac{1}{\sqrt{d}}\sum_{g\in G}{\ket{gh_1, gh_2, \cdots, gh_n, g}}_{0\cdots n},
\end{align*}
We now introduce an isometry $M$ from $\mathcal H^{\ot n}$ to the invariant subspace of $\mathcal H^{\ot(n+1)}$,
\begin{align*}
M=\frac{1}{\sqrt{d}}\sum_{g,h_1,\cdots, h_n}{\ket{gh_1,gh_2,\cdots, gh_n,g}_{0\cdots n}\bra{h_1,h_2,\cdots, h_n}_{0 \cdots n-1}}.
\end{align*}
The projector onto the invariant subspace is defined as $\Pi_{0\cdots n} =M M^\dagger$. $\Pi_{0\cdots n}$ has the important property that, upon tracing out any one of the subsystems, it becomes the identity operator on the remaining subsystems.  That is
\begin{equation}\label{eq:traceout}
\tr_i \Pi_{0\cdots n} = I_{0\cdots \hat i \cdots n}.
\end{equation}
A random invariant state $\ket \Psi_{0\cdots n}$ is constructed by choosing a random state $\ket \phi_{0\cdots n-1}$ in $\mathcal H^{\ot n}$ from the unitary invariant measure, and then mapping $\ket \phi$ to $\mathcal H^{\ot (n+1)}$ using the isometry $M$, $\ket \Psi_{0\cdots n}= M \ket \phi_{0 \cdots n-1}$. 
\par
\ssection{Proof of~\cref{eq:appres}}
To begin, we will upper bound the moment generating function, $\mathbb E_{\Psi}\exp\left(t \tr{\left[\sigma_{01}\Psi_{01}\right]} \right)$, for an arbitrary density matrix $\sigma_{01}$, where $\Psi_{01}=\tr_{\hat0\hat1} \left[\Psi_{0\cdots n}\right]$ and the average is over random invariant states $\ket \Psi_{0\cdots n}$.
Note that $\tr{\left[\sigma_{01}\Psi_{01}\right]}=\tr{\left[\sigma_{01}\Psi_{0\cdots n}\right]}=\tr{\left[\sigma_{01}M\phi_{0\cdots n-1}M^\dagger\right]}=\tr{\left[M^\dagger\sigma_{01}M\phi_{0\cdots n-1}\right]}$. One can easily check that $M^\dagger \sigma_{01} M= \sigma^G_{01} \ot I_{2 \cdots n-1}$, where \[\sigma^G_{01}=\frac{1}{d} \sum_{g,h_1,h_2,h_1',h_2'}\ket{h_1,h_2}\bra{gh_1,gh_2}\sigma_{01} \ket{gh_1',gh_2'}\bra{h_1',h_2'}.\] 
One can also check that $\sigma^G_{01}$ is a density matrix, specifically a version of $\sigma_{01}$ symmetrized by the group $G$. Therefore, \[ \tr{\left[\sigma_{01}\Psi_{0\cdots n}\right]}= \bra{\phi} \sigma^G_{01} \ket \phi, \] where $\ket \phi=\ket \phi_{0\cdots n-1}$ is a state on $\mathcal H^{\ot n}$ chosen from the unitary invariant measure (see the first subsection of this appendix).
\par 
We now choose a Gaussian state $\ket g_{0 \cdots n-1}$ in which the coefficients of the wave function are chosen i.i.d from a complex Gaussian distribution centered at zero with variance $d^{-n}$. Thus $\mathbb E_{\ket g} \| g \|_2^2=1$. Therefore, we have
\begin{align}\label{eq:exprel}
\mathbb E_{\ket g} \exp\big(t\bra g \sigma^G_{01} \ket g \big)\nonumber
& =\mathbb E_{\ket \phi} \mathbb E_{\|g\|_2} \exp \big(t\,\| g\|_2^2\bra \phi \sigma^G_{01} \ket \phi \big)\nonumber \\
& \geq \mathbb E_{\ket \phi} \exp \Big(t\,\big[\mathbb E_{\|g\|_2}  \| g\|_2^2\big]\bra \phi \sigma^G_{01} \ket \phi \Big)\nonumber\\ 
& = \mathbb E_{\ket \phi} \exp \left(t\bra \phi \sigma^G_{01} \ket \phi \right)\nonumber\\
&=\mathbb E_{\Psi}\exp\left(t \tr{\left[\sigma_{01}\Psi_{0\cdots n}\right]}\right),
\end{align}
where the inequality follows from the convexity of the exponential function. 
\par
Now suppose that the eigenvalues of $\sigma_{01}^G$ are $p_{i_0,i_1}$. Since the Gaussian states are unitarily invariant, we can evaluate $\mathbb E_{\ket g} \exp\left(t\bra g \sigma^G_{01} \ket g \right)$ in a basis in which $\sigma_{01}^G \ot I_{2\cdots n-1}$ is diagonal. 
In that basis, 
\begin{equation*}
\mathbb E_{\ket g} \exp\left(t\bra g \sigma^G_{01} \ket g \right)=\mathbb E_{\ket g} \exp\left(t\sum_{i_0 \cdots i_{n-1}}{ p_{i_0,i_1} |g_{i_0 \cdots i_{n-1}}|^2 }\right)= 
\prod_{i_0 \cdots i_{n-1}}{\mathbb E_{g_{i_0\cdots i_{n-1}}}\exp \left(t\, p_{i_0,i_1} |g_{i_0 \cdots i_{n-1}}|^2\right)}
\end{equation*}
However, the radial probability density for each coefficient is $p\,(|g_{i_0 \cdots i_{n-1}}|)=2d^n|g_{i_0 \cdots i_{n-1}}| \exp \left(-d^n |g_{i_0 \cdots i_{n-1}}|^2 \right)$. 
Using the probability density formula, we find
\begin{align*}
\mathbb E_{g_{i_0\cdots i_{n-1}}}\exp \left(t p_{i_0,i_1} |g_{i_0 \cdots i_{n-1}}|^2\right)=\frac{1}{1-\frac{t\, p_{i_0,i_1}}{d^n}} \qquad \text{for}\qquad t\leq d^n/p_{i_0,i_1}.
\end{align*}
Assuming $t\leq d^n$, we have
\begin{align*}
\mathbb E_{\ket g} \exp\left(t\bra g \sigma^G_{01} \ket g \right)=\prod_{i_0,i_1} \Big(1-\frac{t\,p_{i_0,i_1}}{d^n}\Big)^{-d^{n-2}}.
\end{align*}
Ultimately, we will fix the value of $t$ to prove the bound in~\cref{eq:appres}, but we need to  distinguish the cases in which $t$ is positive or negative to bound the fluctuations of $\tr{[\sigma_{01}\Psi_{01}]}$ above or below $1/d^2$. Therefore, we discuss these two different ranges for $t$ separately:
\begin{enumerate}
\item Positive $t$:

We will use the assumption that $t$ is positive to limit the fluctuations of $\tr{[\sigma_{01}\Psi_{01}]}$ \emph{above} $1/d^2$. Let $0<s<1$ be a fixed number, and restrict $t$ to $0\leq t \leq s d^n$. Under these conditions, we have,
\begin{align*}
\left(1- \frac{t\, p_{i_0,i_1}}{d^n} \right)^{-1}\leq \left(1+\frac{1}{1-s} \frac{t\,p_{i_0,i_1}}{d^n} \right).
\end{align*}
Therefore, 
\begin{align*}
\left(1- \frac{t\, p_{i_0,i_1}}{d^n} \right)^{-d^{n-2}}\leq \left(1+\frac{1}{1-s} \frac{t\,p_{i_0,i_1}}{d^n} \right)^{d^{n-2}} \leq \exp\left(\frac{1}{1-s}t\,p_{i_0,i_1}d^{-2} \right).
\end{align*}
Combining with~\cref{eq:exprel}, we have
\begin{align}\label{eq:post}
\mathbb E_{\ket g} \exp\big(t\bra g \sigma^G_{01} \ket g \big)\leq \mathbb E_{\ket g} \exp\left(t\bra g \sigma^G_{01} \ket g \right)=\prod_{i_0,i_1} \left(1-\frac{p_{i_0,i_1}t}{d^n}\right)^{-d^{n-2}}\leq \exp\left(\frac{1}{1-s}td^{-2} \right).
\end{align}
To bound the probabilities, we use Bernstein's trick:
\begin{align*}
\text{P}\left(\tr{\left[\sigma_{01}\Psi_{01}\right]}\geq \frac{1}{d^2}+ \frac{\delta}{d^2} \right)
&=\text{P}\left(
\exp\left(t\tr{\left[\sigma_{01}\Psi_{01}\right]}\right)\geq \exp\left(t\frac{1+\delta}{d^2}\right)
 \right) \\
&\leq \Big[\mathbb E_{\Psi} \exp\big(t\tr{\left[\sigma_{01}\Psi_{01}\right]}\big)\Big] \exp\left(-t\frac{1+\delta}{d^2}\right)\\ 
&\leq\exp\left( -td^{-2}\left(1+\delta - \frac{1}{1-s}\right)\right),
\end{align*}
where we used Markov's inequality for the exponentials and~\cref{eq:post}. To obtain the best result, we now set $t=sd^n$ and $s=1-(1+\delta)^{-1/2}$. With this substitution, 
\begin{align*}
\text{P}\left(\tr{\left[\sigma_{01}\Psi_{01}\right]}\geq \frac{1}{d^2}+ \frac{\delta}{d^2} \right)\leq \exp\left(-d^{n-2}(\sqrt{1+\delta}-1)^2 \right)\leq \exp\left(-d^{n-2}\delta^2/6 \right)
\end{align*}
where the last inequality is valid for $0\leq \delta \leq 1$.
\item Negative $t$:

We now use the constraint on $t$ to limit the fluctuations of $\tr{[\sigma_{01}\Psi_{01}]}$ \emph{below} $1/d^2$. Assuming that $s>0$ and $-sd^n \leq t\leq 0$, one can show that 
\begin{align*}
\left( 1-\frac{tp_{i_0,i_1}}{d^n}\right)^{-d^{n-2}}\leq \exp\left(t\frac{\log (1+s)}{s}p_{i_0,i_1}d^{-2} \right).
\end{align*}
Therefore, 
\begin{equation*}
\mathbb E_{\ket g} \exp\left(t\bra g \sigma^G_{01} \ket g \right)
=\prod_{i_0,i_1} \left(1-\frac{p_{i_0,i_1}t}{d^n}\right)^{-d^{n-2}}
\leq \prod_{i_0,i_1} \exp\left(t\frac{\log (1+s)}{s}p_{i_0,i_1}d^{-2} \right)
\leq \exp\left(\frac{\log (1+s)}{s}td^{-2} \right).
\end{equation*}
Thus,
\begin{align*}
\text{P}\left(\tr{\left[\sigma_{01}\Psi_{01}\right]}\leq \frac{1}{d^2}- \frac{\delta}{d^2} \right)
&= \text{P}\left(t\tr{\left[\sigma_{01}\Psi_{01}\right]}\geq t\left(\frac{1}{d^2}- \frac{\delta}{d^2}\right) \right) \\
&=\text{P}\left(\exp\big(t\tr{\left[\sigma_{01}\Psi_{01}\right]}\big)\geq \exp \left(t\frac{1-\delta}{d^2}\right) \right)\\
&\leq \Big[\mathbb E_{\Psi} \exp\left(t\tr{\left[\sigma_{01}\Psi_{01}\right]}\right)\Big]\exp \left(-td^{-2}(1-\delta) \right)\\
&\leq\exp \left(-td^{-2}\left(1-\delta-\frac{\log(1+s)}{s}\right) \right).
\end{align*}
We now fix $t=-sd^n$ and $s=\delta/(1-\delta)$ to get
\begin{align*}
\text{P}\left(\tr{\left[\sigma_{01}\Psi_{01}\right]}\leq \frac{1}{d^2}- \frac{\delta}{d^2} \right)\leq \exp\left[ d^{n-2}\left(\delta +\log(1-\delta)\right)\right]\leq \exp\left(-d^{n-2} \delta^2/2 \right)\leq \exp\left(-d^{n-2} \delta^2/6 \right).
\end{align*}
This concludes the proof of~\cref{eq:appres}.
\end{enumerate}
\end{document}